\documentclass{article}%
\usepackage{amssymb}
\usepackage{amsmath}
\usepackage{amsfonts}
\usepackage{graphicx}%
\setcounter{MaxMatrixCols}{30}
\providecommand{\U}[1]{\protect\rule{.1in}{.1in}}
\newtheorem{theorem}{Theorem}[section]

\newtheorem{lemma}[theorem]{Lemma}

\newtheorem{problem}[theorem]{Problem}

\newtheorem{remark}[theorem]{Remark}

\newenvironment{proof}[1][Proof]{\noindent\textbf{#1.} }{\ \rule{0.5em}{0.5em}}
\setlength{\textwidth}{5.3in}
\begin{document}

\author{Vadim E. Levit\\Ariel University Center of Samaria, Israel\\levitv@ariel.ac.il
\and Eugen Mandrescu\\Holon Institute of Technology, Israel\\eugen\_m@hit.ac.il}
\title{On the Intersection of All Critical Sets of a Unicyclic Graph}
\date{}
\maketitle

\begin{abstract}
A set $S\subseteq V$ is \textit{independent} in a graph $G=\left(  V,E\right)
$ if no two vertices from $S$ are adjacent. The \textit{independence number}
$\alpha(G)$ is the cardinality of a maximum independent set, while $\mu(G)$ is
the size of a maximum matching in $G$. If $\alpha(G)+\mu(G)=\left\vert
V\right\vert $, then $G$ is called a \textit{K\"{o}nig-Egerv\'{a}ry graph}.
The number $d_{c}\left(  G\right)  =\max\{\left\vert A\right\vert -\left\vert
N\left(  A\right)  \right\vert :A\subseteq V\}$ is called the \textit{critical
difference} of $G$ \cite{Zhang}, where $N\left(  A\right)  =\left\{  v:v\in
V,N\left(  v\right)  \cap A\neq\emptyset\right\}  $. By \textrm{core}$\left(
G\right)  $ (\textrm{corona}$(G)$) we denote the intersection (union,
respectively) of all maximum independent sets, while by $\ker\left(  G\right)
$ we mean the intersection of all critical independent sets. A connected graph
having only one cycle is called \textit{unicyclic}.

It is known that the relation $\ker\left(  G\right)  \subseteq$ \textrm{core}%
$\left(  G\right)  $ holds for every graph $G$ \cite{Levman2011a}, while the
equality is true for bipartite graphs \cite{Levman2011b}. For
K\"{o}nig-Egerv\'{a}ry unicyclic graphs, the difference $\left\vert
\mathrm{core}(G)\right\vert -\left\vert \ker\left(  G\right)  \right\vert $
may equal any non-negative integer. 

In this paper we prove that if $G$ is a non-K\"{o}nig-Egerv\'{a}ry unicyclic
graph, then: \emph{(i) }$\ker\left(  G\right)  =$ \textrm{core}$\left(
G\right)  $ and \emph{(ii) }$\left\vert \mathrm{corona}(G)\right\vert
+\left\vert \mathrm{core}(G)\right\vert =2\alpha\left(  G\right)  +1$. Pay
attention that $\left\vert \mathrm{corona}(G)\right\vert +\left\vert
\mathrm{core}(G)\right\vert =2\alpha\left(  G\right)  $ holds for every
K\"{o}nig-Egerv\'{a}ry graph \cite{Levman2011b}.

\textbf{Keywords:} maximum independent set, core, corona, matching, critical
set, unicyclic graph, K\"{o}nig-Egerv\'{a}ry graph.

\end{abstract}

\section{Introduction}

Throughout this paper $G=(V,E)$ is a simple (i.e., a finite, undirected,
loopless and without multiple edges) graph with vertex set $V=V(G)$ and edge
set $E=E(G)$. If $X\subset V$, then $G[X]$ is the subgraph of $G$ spanned by
$X$. By $G-W$ we mean the subgraph $G[V-W]$, if $W\subset V(G)$. For $F\subset
E(G)$, by $G-F$ we denote the partial subgraph of $G$ obtained by deleting the
edges of $F$, and we use $G-e$, if $W$ $=\{e\}$. If $A,B$ $\subset V$ and
$A\cap B=\emptyset$, then $(A,B)$ stands for the set $\{e=ab:a\in A,b\in
B,e\in E\}$. The neighborhood of a vertex $v\in V$ is the set $N(v)=\{w:w\in
V$ \textit{and} $vw\in E\}$, and $N(A)=\cup\{N(v):v\in A\}$, $N[A]=A\cup N(A)$
for $A\subset V$. By $C_{n},K_{n}$ we mean the chordless cycle on $n\geq$ $4$
vertices, and respectively the complete graph on $n\geq1$ vertices.

A set $S$ of vertices is \textit{independent} if no two vertices from $S$ are
adjacent, and an independent set of maximum size will be referred to as a
\textit{maximum independent set}. The \textit{independence number }of $G$,
denoted by $\alpha(G)$, is the cardinality of a maximum independent
set\textit{\ }of $G$.

Let \textrm{core}$(G)=\cap\{S:S\in\Omega(G)\}$ \cite{levm3}, and
\textrm{corona}$(G)=\cup\{S:S\in\Omega(G)\}$ \cite{BorosGolLev}, where
$\Omega(G)=\{S:S$ \textit{is a maximum independent set of} $G\}$.

\begin{theorem}
\cite{BorosGolLev}\label{th11} For every $S\in\Omega\left(  G\right)  $, there
is a matching from $S-$\textrm{core}$(G)$ into \textrm{corona}$(G)-S$.
\end{theorem}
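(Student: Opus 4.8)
The plan is to fix $S\in\Omega(G)$ and reduce the statement to a verification of Hall's condition. I would consider the bipartite graph $B$ whose parts are $X=S-\textrm{core}(G)$ and $Y=\textrm{corona}(G)-S$ and whose edges are exactly the edges of $G$ joining $X$ to $Y$. A matching from $S-\textrm{core}(G)$ into $\textrm{corona}(G)-S$ is precisely an $X$-saturating matching in $B$, so by Hall's theorem it suffices to prove that $\left\vert N(A)\cap Y\right\vert \geq\left\vert A\right\vert$ for every $A\subseteq X$.

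The key tool is a symmetric-difference lemma: for every $T\in\Omega(G)$ the bipartite graph $G[(S-T)\cup(T-S)]$ has a perfect matching $M_{T}$ that saturates both $S-T$ and $T-S$. I would prove this by applying Hall to the side $T-S$: if some $B^{\prime}\subseteq T-S$ satisfied $\left\vert N(B^{\prime})\cap(S-T)\right\vert <\left\vert B^{\prime}\right\vert$, then, writing $N^{\prime}=N(B^{\prime})\cap(S-T)$, the set $(S-N^{\prime})\cup B^{\prime}$ would be independent (since $S$ and $T$ are independent and no vertex of $B^{\prime}$ has a neighbour in $(S-T)-N^{\prime}$) and strictly larger than $S$, contradicting $S\in\Omega(G)$; as $\left\vert S-T\right\vert =\left\vert T-S\right\vert =\alpha(G)-\left\vert S\cap T\right\vert$, the matching is perfect. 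The decisive observation is that $S-T\subseteq X$ and $T-S\subseteq Y$ (any vertex of $S$ omitted by some maximum independent set is not in the core, and every vertex of $T$ lies in the corona), so each $M_{T}$ is in fact a matching inside $B$ pairing $S-T$ with $T-S$.

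With this in hand, Hall's condition for a single excludable vertex is immediate: if $v\in X$ then some $T$ omits $v$, and $M_{T}$ pairs $v$ with a neighbour in $T-S\subseteq Y$. To obtain the condition for an arbitrary $A\subseteq X$ I would argue through an alternating-path analysis. Taking a maximum matching $M$ of $B$ that leaves some $v_{0}\in X$ exposed and letting $A$ be the set of $X$-vertices reachable from $v_{0}$ along $M$-alternating paths, the standard Hungarian-tree structure gives $N(A)\cap Y=M(A)$ with $\left\vert N(A)\cap Y\right\vert =\left\vert A\right\vert -1$ and every vertex of $N(A)\cap Y$ matched by $M$ back into $A$. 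Choosing $T\in\Omega(G)$ that omits $v_{0}$ while omitting a maximal subset of $A$, and tracing the path alternating between the edges of $M_{T}$ and of $M$ inside $A\cup(N(A)\cap Y)$, one is forced to reach a vertex of $A\cap T$ at which $M_{T}$ cannot continue; rerouting along this alternating path should then yield a maximum independent set omitting strictly more of $A$ than $T$ does, contradicting the maximal choice of $T$.

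I expect the main obstacle to be exactly this last step. A single maximum independent set generally cannot omit all of $A$ simultaneously (already for $P_{4}$ with $S=\{1,4\}$ one needs two different maximum independent sets to exclude both endpoints), so the neighbours produced by different matchings $M_{T}$ must be shown to be combinable into one matching in $B$. The symmetric-difference lemma and the bookkeeping of the alternating tree are mechanical; making the exchange precise—so that the distinctness of the matched partners in $Y$ is guaranteed across several maximum independent sets—is the crux of the argument.
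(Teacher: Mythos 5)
The paper itself quotes this theorem from \cite{BorosGolLev} without proof, so your attempt can only be judged on its own terms. The reduction to Hall's condition, the symmetric-difference lemma (a perfect matching of $G[(S-T)\cup(T-S)]$ pairing $S-T$ with $T-S$), and the singleton case are all correct. But the proof is not complete: Hall's condition for an arbitrary $A\subseteq S-\mathrm{core}(G)$ is essentially the whole content of the theorem, and the alternating-path rerouting you sketch for it fails as described. The path $v_{0}w_{1}v_{1}\dots w_{k}v_{k}$ obtained by alternating between $M_{T}$ and $M$ has $k+1$ of its vertices in $T$ (namely $w_{1},\dots,w_{k},v_{k}$) and only $k$ outside $T$, so exchanging along it turns $T$ into a set of size $\alpha(G)-1$; the exchanged set need not even be independent, since a vertex $v_{i}\in S-T$ may have neighbours in $T-S$ other than its $M_{T}$-partner, and those need not lie on the path; and the exchange inserts $v_{0},\dots,v_{k-1}\in A$ into $T$ while deleting only $v_{k}\in A$, so it would produce a set omitting \emph{fewer} vertices of $A$ than $T$ does, the opposite of what your extremal choice of $T$ demands. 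You correctly flag this step as the crux, but the idea you propose for it cannot be repaired in the form given.

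A way to close the gap that stays inside your framework is to prove, by induction on $\left\vert A\right\vert $ over all pairs $(S,A)$ with $S\in\Omega(G)$ and $A\subseteq S-\mathrm{core}(G)$, that $\left\vert N(A)\cap(\mathrm{corona}(G)-S)\right\vert \geq\left\vert A\right\vert $. Put $D=N(A)\cap(\mathrm{corona}(G)-S)$, pick $v\in A$ and $T\in\Omega(G)$ with $v\notin T$. Your matching $M_{T}$ injects $A-T$ into $D\cap T$, so $\left\vert A-T\right\vert \leq\left\vert D\cap T\right\vert $. On the other hand, $A\cap T$ is a proper subset of $A$ contained in $T-\mathrm{core}(G)$, and every vertex of $N(A\cap T)\cap(\mathrm{corona}(G)-T)$ is a neighbour of a vertex of $S$, hence lies outside $S$ and therefore in $D-T$; the induction hypothesis applied to the pair $(T,A\cap T)$ gives $\left\vert A\cap T\right\vert \leq\left\vert D-T\right\vert $. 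Adding the two inequalities yields $\left\vert A\right\vert \leq\left\vert D\right\vert $. The essential move is to re-base the problem at $T$ rather than to hunt for a single extremal maximum independent set by alternating-path exchanges; as your own $P_{4}$ example shows, no single $T$ can be expected to witness all of $A$ at once.
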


An edge $e\in E(G)$ is $\alpha$-\textit{critical} whenever $\alpha
(G-e)>\alpha(G)$. Notice that $\alpha(G)\leq\alpha(G-e)\leq\alpha(G)+1$ holds
for each edge $e$.

The number $d(X)=\left\vert X\right\vert -\left\vert N(X)\right\vert
,X\subseteq V(G)$, is called the \textit{difference} of the set $X$, while
$d_{c}(G)=\max\{d(X):X\subseteq V\}$ is called the \textit{critical
difference} of $G$. A set $U\subseteq V(G)$ is \textit{critical} if
$d(U)=d_{c}(G)$ \cite{Zhang}. The number $id_{c}(G)=\max\{d(I):I\in
\mathrm{Ind}(G)\}$ is called the \textit{critical independence difference} of
$G$. If $A\subseteq V(G)$ is independent and $d(A)=id_{c}(G)$, then $A$ is
called a \textit{critical independent set }\cite{Zhang}. Clearly,
$d_{c}(G)\geq id_{c}(G)$ is true for every graph $G$.

\begin{theorem}
\cite{Zhang} The equality $d_{c}(G)$ $=id_{c}(G)$ holds for every graph $G$.
\end{theorem}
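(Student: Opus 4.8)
The plan is to prove the nontrivial inequality $d_{c}(G)\le id_{c}(G)$; the reverse $d_{c}(G)\ge id_{c}(G)$ is immediate, since every independent set is in particular a subset of $V$. To this end I would pass to the fractional relaxation of the independence number,
\[
\alpha_{f}(G)=\max\Big\{\sum_{v\in V}y_{v}\ :\ 0\le y_{v}\le 1,\ y_{u}+y_{v}\le 1\text{ for all }uv\in E\Big\},
\]
and show that \emph{both} $d_{c}(G)$ and $id_{c}(G)$ are pinned down by the single quantity $2\alpha_{f}(G)-\left\vert V\right\vert $.

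First I would establish the upper bound $d(X)\le 2\alpha_{f}(G)-\left\vert V\right\vert $ for \emph{every} $X\subseteq V$, which gives $d_{c}(G)\le 2\alpha_{f}(G)-\left\vert V\right\vert $. The idea is to exhibit, for a given $X$, an explicit half-integral feasible point $y$: set $y_{v}=1$ on the vertices of $X$ that are isolated in $G[X]$ (that is, on $X\setminus N(X)$), set $y_{v}=0$ on $N(X)\setminus X$, and set $y_{v}=\tfrac12$ on everything else (the non-isolated vertices $X\cap N(X)$ together with the vertices outside $N[X]$). Feasibility is routine: a vertex with value $1$ lies in $X\setminus N(X)$, so all its neighbours lie in $N(X)\setminus X$ and carry value $0$, and no two value-$1$ vertices are adjacent; every remaining edge joins two vertices of value at most $\tfrac12$. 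A short count then yields $2\sum_{v}y_{v}=\left\vert V\right\vert +d(X)$, whence $2\alpha_{f}(G)-\left\vert V\right\vert \ge d(X)$. The crux here is the decision to demote the ``internal-edge'' vertices $X\cap N(X)$ to value $\tfrac12$: this is exactly what compensates for $X$ not being independent, and it is the step I expect to require the most care.

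Second I would produce an independent set attaining the bound, using the classical half-integrality of the fractional stable-set (vertex-packing) polytope, so the maximum above is achieved at some $y^{*}\in\{0,\tfrac12,1\}^{V}$. Put $P=\{v:y^{*}_{v}=1\}$ and $Z=\{v:y^{*}_{v}=0\}$. The edge constraints force $P$ to be independent and $N(P)\subseteq Z$, hence $\left\vert N(P)\right\vert \le\left\vert Z\right\vert $ and
\[
d(P)=\left\vert P\right\vert -\left\vert N(P)\right\vert \ \ge\ \left\vert P\right\vert -\left\vert Z\right\vert \ =\ 2\textstyle\sum_{v}y^{*}_{v}-\left\vert V\right\vert \ =\ 2\alpha_{f}(G)-\left\vert V\right\vert .
\]
Combining this with the first part gives $d(P)\ge 2\alpha_{f}(G)-\left\vert V\right\vert \ge d_{c}(G)\ge d(P)$, so all inequalities collapse to equalities: $P$ is an independent set with $d(P)=d_{c}(G)$. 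Therefore $id_{c}(G)\ge d(P)=d_{c}(G)$, and with the trivial inequality this yields $d_{c}(G)=id_{c}(G)$.

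I would flag two points as the real content. The explicit fractional point of the first step is the heart of the argument, because it is where the full generality of $X$ (possibly far from independent) is absorbed; and the appeal to half-integrality is the one external ingredient, which could instead be replaced by an equivalent matching/K\"{o}nig statement obtained from the bipartite double cover $B(G)$ of $G$, where one checks $d_{c}(G)=\left\vert V\right\vert -\mu\big(B(G)\big)$ via the defect form of K\"{o}nig's theorem. Everything else is bookkeeping around the identity $2\alpha_{f}(G)-\left\vert V\right\vert $.
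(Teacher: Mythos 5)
The paper offers no proof of this statement---it is quoted verbatim from Zhang's article with a citation---so there is no internal argument to compare yours against; I can only assess your proposal on its own terms, and it is correct. Both halves check out. In the first step, writing $A=X\setminus N(X)$, $B=X\cap N(X)$, $C=N(X)\setminus X$, $D=V\setminus N[X]$, your point $y$ is feasible (every neighbour of a vertex of $A$ lies in $C$, and all other edges have both endpoints of value at most $\tfrac12$) and indeed $2\sum_v y_v=2|A|+|B|+|D|=|V|+|X|-|N(X)|=|V|+d(X)$, giving $d_c(G)\le 2\alpha_f(G)-|V|$. In the second step, a half-integral optimum $y^*$ (Nemhauser--Trotter, or equivalently K\"{o}nig's theorem applied to the bipartite double cover, as you note) yields an independent set $P$ with $N(P)\subseteq Z$ and $d(P)\ge |P|-|Z|=2\alpha_f(G)-|V|$, which sandwiches everything into equality and shows $id_c(G)\ge d_c(G)$; the reverse inequality is trivial. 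This LP route differs from Zhang's original argument, which works directly with flows/matchings, and from the purely combinatorial proofs that replace a non-independent critical set by an independent one via local exchanges; what your version buys is a single clean invariant, $2\alpha_f(G)-|V|$, that simultaneously certifies the upper bound for arbitrary $X$ and produces an independent witness, at the cost of importing half-integrality as a black box. The one place to be careful in a written version is exactly the spot you flagged: the verification that demoting $X\cap N(X)$ to value $\tfrac12$ makes the count come out to $|V|+d(X)$, which works only because $N(X)$ here is the open neighbourhood and so counts $B$ on both sides.
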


For a graph $G$, let denote $\mathrm{\ker}(G)=\cap\left\{  S:S\text{
\textit{is a critical independent set}}\right\}  $.

\begin{theorem}
\label{th2}If $G$ is a graph, then

\emph{(i)} \cite{Levman2011a} $\mathrm{\ker}\left(  G\right)  $ is a critical
independent set and $\mathrm{\ker}\left(  G\right)  \subseteq\mathrm{core}(G)$;

\emph{(ii)} \cite{Levman2011b} $\mathrm{\ker}\left(  G\right)  =\mathrm{core}%
(G)$, whenever $G$ is bipartite.
\end{theorem}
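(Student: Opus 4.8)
The plan is to handle the two parts with different tools: part \emph{(i)} rests only on Zhang's identity $d_{c}(G)=id_{c}(G)$ together with elementary counting for the neighborhood operator, whereas part \emph{(ii)} will additionally need the K\"{o}nig--Egerv\'{a}ry structure that bipartiteness supplies.

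For \emph{(i)} I would first prove that critical independent sets are closed under intersection. For arbitrary $A,B\subseteq V$ one has $N(A\cup B)=N(A)\cup N(B)$ and $N(A\cap B)\subseteq N(A)\cap N(B)$, and therefore the supermodular inequality $d(A)+d(B)\le d(A\cup B)+d(A\cap B)$. If $A$ and $B$ are both critical independent sets, then $d(A)=d(B)=id_{c}(G)=d_{c}(G)$; since $A\cap B$ is independent (so $d(A\cap B)\le id_{c}(G)$) and $d(A\cup B)\le d_{c}(G)$, the inequality is squeezed into equalities, forcing $d(A\cap B)=id_{c}(G)$. As $G$ is finite, $\ker(G)$ is a finite intersection of critical independent sets and is hence itself critical, which is the first assertion of \emph{(i)}.

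Next I would show $\ker(G)\subseteq S$ for every $S\in\Omega(G)$. Writing $K=\ker(G)$ and $S'=(S\setminus N(K))\cup K$, the set $S'$ is independent, because no vertex of $S\setminus N(K)$ can be adjacent to $K$; thus $|S'|\le\alpha(G)=|S|$, which rearranges to $|K\setminus S|\le|S\cap N(K)|$. Moreover no vertex of $S$ lies in $N(K\cap S)$, since two adjacent vertices cannot both belong to the independent set $S$, so $S\cap N(K)\subseteq N(K)\setminus N(K\cap S)$ and therefore $|N(K\cap S)|\le|N(K)|-|K\setminus S|$. Substituting these two estimates gives $d(K\cap S)\ge|K|-|N(K)|=d_{c}(G)$, so $K\cap S$ is critical; since $\ker(G)$ is contained in every critical set, $\ker(G)\subseteq K\cap S\subseteq S$. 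As $S$ was arbitrary, $\ker(G)\subseteq\mathrm{core}(G)$, completing \emph{(i)}.

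For \emph{(ii)} the inclusion $\ker(G)\subseteq\mathrm{core}(G)$ is already available, so the task reduces to proving $\mathrm{core}(G)\subseteq\ker(G)$ for bipartite $G$, that is, that $\mathrm{core}(G)$ lies inside every critical independent set. My plan has two stages. First, using that a bipartite graph is K\"{o}nig--Egerv\'{a}ry, I would fix a maximum independent set $S$ and a maximum matching saturating $V\setminus S$ into $S$; this gives $N(S)=V\setminus S$ and the value $d_{c}(G)=\alpha(G)-\mu(G)$, from which I would deduce that $\mathrm{core}(G)$ is itself critical. Second --- and this is the genuine obstacle --- I would prove that $\mathrm{core}(G)$ is the \emph{minimal} critical set. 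Here I would take an arbitrary critical set $A$ and a vertex $v\in\mathrm{core}(G)$, invoke the extendability of a critical independent set to a maximum independent set to see that $v\notin N(A)$ (so $A\cup\{v\}$ is independent), and then exploit bipartiteness: an alternating path issuing from $v$ along the K\"{o}nig matching would augment $A$ and contradict $d(A\cup\{v\})\le d_{c}(G)$ unless $v\in A$. The hard part is exactly this alternating-path argument, and it is precisely where bipartiteness is indispensable: mere K\"{o}nig--Egerv\'{a}ry-ness does not suffice, since among K\"{o}nig--Egerv\'{a}ry unicyclic graphs the difference $\left\vert\mathrm{core}(G)\right\vert-\left\vert\ker(G)\right\vert$ can be any nonnegative integer.
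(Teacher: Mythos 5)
This theorem is stated in the paper without proof --- it is quoted from the references \cite{Levman2011a} and \cite{Levman2011b} --- so the comparison is against those cited arguments rather than a proof in the text. Your part \emph{(i)} is complete and correct, and is essentially the standard argument of the cited source: the identity $N(A\cup B)=N(A)\cup N(B)$ and the inclusion $N(A\cap B)\subseteq N(A)\cap N(B)$ do give $d(A)+d(B)\leq d(A\cup B)+d(A\cap B)$, and with Zhang's equality $d_{c}(G)=id_{c}(G)$ the squeeze forces $d(A\cap B)=d_{c}(G)$, so critical independent sets are closed under (finite) intersection and $\ker(G)$ is itself critical. Your swap argument is also sound: $S'=(S\setminus N(K))\cup K$ is independent, whence $|K\setminus S|\leq |S\cap N(K)|$; combined with $S\cap N(K)\subseteq N(K)\setminus N(K\cap S)$ this yields $d(K\cap S)\geq d(K)=d_{c}(G)$, so $K\cap S$ is a critical independent set and $\ker(G)\subseteq K\cap S\subseteq S$ for every $S\in\Omega(G)$, i.e., $\ker(G)\subseteq\mathrm{core}(G)$.

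Part \emph{(ii)}, however, contains a genuine gap: the inclusion $\mathrm{core}(G)\subseteq\ker(G)$, which is the entire content of the bipartite case, is never proved. You correctly reduce it to showing that every critical independent set $A$ contains $\mathrm{core}(G)$, but the decisive step is left at the level of ``an alternating path issuing from $v$ along the K\"{o}nig matching would augment $A$ and contradict $d(A\cup\{v\})\leq d_{c}(G)$.'' As written, $d(A\cup\{v\})\leq d_{c}(G)$ only yields the existence of some $y\in N(v)\setminus N(A)$; you do not specify which matching the alternating path follows, what quantity the augmentation increases, why the path terminates in a configuration violating $d_{c}(G)$, or where bipartiteness (as opposed to the K\"{o}nig--Egerv\'{a}ry property, which you rightly note is insufficient) actually intervenes --- so no contradiction is derived, and this is precisely the several-page heart of \cite{Levman2011b}. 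Two subsidiary claims are also asserted without proof: the extendability of a critical independent set to a maximum independent set (a nontrivial theorem of Butenko and Trukhanov, which your argument needs to conclude $v\notin N(A)$), and the equality $d_{c}(G)=\alpha(G)-\mu(G)$ for bipartite $G$ (true, and provable by noting $\alpha(G)\geq |A|+\alpha(G-N[A])$, $\mu(G)\leq |N(A)|+\mu(G-N[A])$, and that $G-N[A]$ is again bipartite). Note also that your ``stage 1'' --- criticality of $\mathrm{core}(G)$ --- follows at once from your own intersection lemma in part \emph{(i)} once every $S\in\Omega(G)$ is known to be critical, but it cannot substitute for stage 2: a critical set maximizes the difference $d$, which says nothing about its being minimal under inclusion among critical independent sets.
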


A matching (i.e., a set of non-incident edges of $G$) of maximum cardinality
$\mu(G)$ is a \textit{maximum matching}, and a \textit{perfect matching} is
one covering all vertices of $G$. An edge $e\in E(G)$ is $\mu$%
-\textit{critical }provided $\mu(G-e)<\mu(G)$.

It is well-known that $\lfloor n/2\rfloor+1\leq\alpha(G)+\mu(G)\leq n$ hold
for any graph $G$ with $n$ vertices. If $\alpha(G)+\mu(G)=n$, then $G$ is
called a \textit{K\"{o}nig-Egerv\'{a}ry graph }\cite{dem}, \cite{ster}.
Several properties of K\"{o}nig-Egerv\'{a}ry graphs are presented in
\cite{levm2}, \cite{LevMan2003}, \cite{LevMan5}.

According to a celebrated result of K\"{o}nig, \cite{koen}, and Egerv\'{a}ry,
\cite{eger}, any bipartite graph is a K\"{o}nig-Egerv\'{a}ry\emph{ }graph.
This class includes also non-bipartite graphs (see, for instance, the graph
$G$ in Figure \ref{fig112}).

\begin{figure}[h]
\setlength{\unitlength}{1cm}\begin{picture}(5,1.8)\thicklines
\multiput(4,0.5)(1,0){5}{\circle*{0.29}}
\multiput(5,1.5)(2,0){2}{\circle*{0.29}}
\put(4,0.5){\line(1,0){4}}
\put(5,0.5){\line(0,1){1}}
\put(7,1.5){\line(1,-1){1}}
\put(7,0.5){\line(0,1){1}}
\put(4,0.1){\makebox(0,0){$a$}}
\put(4.7,1.5){\makebox(0,0){$b$}}
\put(6,0.1){\makebox(0,0){$c$}}
\put(5,0.1){\makebox(0,0){$u$}}
\put(7,0.1){\makebox(0,0){$v$}}
\put(6.7,1.5){\makebox(0,0){$x$}}
\put(8,0.1){\makebox(0,0){$y$}}
\put(3.2,1){\makebox(0,0){$G$}}
\end{picture}
\caption{A K\"{o}nig-Egerv\'{a}ry graph with $\alpha(G)=\left\vert \left\{
a,b,c,x\right\}  \right\vert $ and $\mu(G)=\left\vert \left\{
au,cv,xy\right\}  \right\vert $.}%
\label{fig112}%
\end{figure}

\begin{theorem}
\label{th1}\cite{LevMan2003} If $G$ is a K\"{o}nig-Egerv\'{a}ry graph, then
every maximum matching matches $N($\textrm{core}$(G))$ into \textrm{core}$(G)$.
\end{theorem}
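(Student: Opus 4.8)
The plan is to reduce the whole statement to a single counting observation about maximum matchings in a König-Egerváry graph, and then to exploit the defining feature of $\mathrm{core}(G)$: that it is contained in \emph{every} maximum independent set. Throughout, write $C=\mathrm{core}(G)$ and recall that $C$ is independent (being a subset of any $S\in\Omega(G)$), so $N(C)\cap C=\emptyset$; this is worth checking at the outset so that the phrase ``matching $N(C)$ into $C$'' is not self-referential.

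First I would establish the auxiliary fact that will carry the argument: if $S\in\Omega(G)$ and $M$ is any maximum matching, then $M$ saturates all of $V\setminus S$ and every edge of $M$ joins a vertex of $V\setminus S$ to a vertex of $S$. This is a direct count. Since $G$ is König-Egerváry, $\left|V\setminus S\right|=\left|V\right|-\alpha(G)=\mu(G)=\left|M\right|$. Because $S$ is independent, each of the $\mu(G)$ pairwise-disjoint edges of $M$ has at least one endpoint in $V\setminus S$; as these endpoints are distinct and $V\setminus S$ has exactly $\mu(G)$ vertices, each edge has \emph{exactly} one endpoint in $V\setminus S$, its other endpoint lies in $S$, and every vertex of $V\setminus S$ is covered. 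In short, $M$ matches $V\setminus S$ into $S$.

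Next I would fix a maximum matching $M$ and a vertex $w\in N(C)$, so $w$ is adjacent to some $c\in C$. The crucial point is that $w$ lies in $V\setminus S$ for \emph{every} $S\in\Omega(G)$ simultaneously: indeed $c\in S$ since $c\in C\subseteq S$, and if $w$ were also in $S$ then the edge $cw$ would lie inside the independent set $S$, a contradiction. By the auxiliary fact, $M$ therefore saturates $w$, and its $M$-partner $m(w)$ lies in $S$. Here is the decisive twist: $m(w)$ is a single fixed vertex, depending only on $M$ and not on the choice of $S$, yet it belongs to every $S\in\Omega(G)$. Hence $m(w)\in\bigcap\{S:S\in\Omega(G)\}=C$, so each $w\in N(C)$ is matched by $M$ into $C$, which is exactly the assertion.

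I expect the only genuine content to reside in the counting of the first step; everything afterwards is the observation that the $M$-partner of a neighbour of the core is forced into all maximum independent sets at once. Accordingly the ``hard part'' is less a computation than a matter of sequencing: one must fix $M$ first and let $S$ range, rather than fixing $S$, so that the intersection over $\Omega(G)$ can be taken while the matched edge stays the same. A final sanity check I would perform is the degenerate case $N(C)=\emptyset$, for which the conclusion holds vacuously and the argument above imposes no constraint.
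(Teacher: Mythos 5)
Your proof is correct. Note that the paper itself states this theorem only with a citation to \cite{LevMan2003} and gives no proof, so there is nothing in-text to compare against; your argument --- first showing that in a K\"{o}nig-Egerv\'{a}ry graph every maximum matching matches $V\setminus S$ into $S$ for each $S\in\Omega(G)$ by the count $|V\setminus S|=\mu(G)$, then observing that the fixed $M$-partner of a vertex of $N(\mathrm{core}(G))$ must lie in every maximum independent set and hence in $\mathrm{core}(G)$ --- is the standard route to this result and is carried out correctly, including the checks that $N(\mathrm{core}(G))\cap\mathrm{core}(G)=\emptyset$ and the vacuous case $N(\mathrm{core}(G))=\emptyset$.
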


The graph $G$ is called \textit{unicyclic} if it is connected and has a unique
cycle, which we denote by $C=\left(  V(C),E\left(  C\right)  \right)  $. Let%
\[
N_{1}(C)=\{v:v\in V\left(  G\right)  -V(C),N(v)\cap V(C)\neq\emptyset\},
\]
and $T_{x}=(V_{x},E_{x})$ be the tree of $G-xy$ containing $x$, where $x\in
N_{1}(C),y\in V(C)$.\begin{figure}[h]
\setlength{\unitlength}{1cm}\begin{picture}(5,1.8)\thicklines
\multiput(2,0.5)(1,0){6}{\circle*{0.29}}
\multiput(3,1.5)(1,0){4}{\circle*{0.29}}
\put(2,0.5){\line(1,0){5}}
\put(3,1.5){\line(1,-1){1}}
\put(4,0.5){\line(0,1){1}}
\put(5,1.5){\line(1,0){1}}
\put(6,1.5){\line(1,-1){1}}
\put(5,0.5){\line(0,1){1}}
\put(2,0.1){\makebox(0,0){$u$}}
\put(3,0.1){\makebox(0,0){$v$}}
\put(4,0.1){\makebox(0,0){$x$}}
\put(5,0.1){\makebox(0,0){$y$}}
\put(6,0.1){\makebox(0,0){$w$}}
\put(7,0.1){\makebox(0,0){$c$}}
\put(2.7,1.5){\makebox(0,0){$a$}}
\put(3.7,1.5){\makebox(0,0){$b$}}
\put(4.7,1.5){\makebox(0,0){$d$}}
\put(6.3,1.5){\makebox(0,0){$t$}}
\put(1.1,1){\makebox(0,0){$G$}}
\multiput(9,0.5)(1,0){3}{\circle*{0.29}}
\multiput(10,1.5)(1,0){2}{\circle*{0.29}}
\put(9,0.5){\line(1,0){2}}
\put(10,1.5){\line(1,-1){1}}
\put(11,0.5){\line(0,1){1}}
\put(9,0.1){\makebox(0,0){$u$}}
\put(10,0.1){\makebox(0,0){$v$}}
\put(11,0.1){\makebox(0,0){$x$}}
\put(9.7,1.5){\makebox(0,0){$a$}}
\put(10.7,1.5){\makebox(0,0){$b$}}
\put(8.25,1){\makebox(0,0){$T_{x}$}}
\end{picture}\caption{$G$ is a unicyclic non-K\"{o}nig-Egerv\'{a}ry graph with
$V(C)=\{y,d,t,c,w\}$.}%
\end{figure}
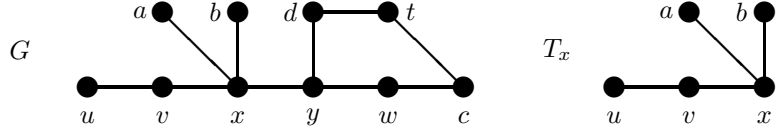

The following result shows that a unicyclic graph is either a
K\"{o}nig-Egerv\'{a}ry graph or each edge of its cycle is $\alpha$-critical.

\begin{lemma}
\cite{Levman2011d}\label{lem2} If $G$ is a unicyclic graph of order $n$, then

\emph{(i)} $n-1\leq\alpha(G)+\mu(G)\leq n$;

\emph{(ii)} $n-1=\alpha(G)+\mu(G)$ if and only if each edge of the unique
cycle is $\alpha$-critical.
\end{lemma}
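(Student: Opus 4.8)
The plan is to reduce everything to the \emph{spanning-tree trick}: a unicyclic graph on $n$ vertices has exactly $n$ edges, so deleting any single edge $e$ of the unique cycle $C$ leaves a connected, acyclic spanning subgraph, i.e. a spanning tree $T_{e}=G-e$. Since every tree is bipartite and therefore K\"{o}nig-Egerv\'{a}ry, we get the identity
\[
\alpha(T_{e})+\mu(T_{e})=n\qquad\text{for each }e\in E(C).
\]
I would combine this with the two edge-deletion estimates valid for every graph and edge, namely $\alpha(G)\le\alpha(G-e)\le\alpha(G)+1$ and $\mu(G)-1\le\mu(G-e)\le\mu(G)$; the first holds because deleting an edge preserves independence and can enlarge a maximum independent set by at most one vertex, the second because a maximum matching of $G$ loses at most the edge $e$ when passing to $G-e$.

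For part \emph{(i)} I fix one cycle edge $e$ and read the identity as $n=\alpha(G-e)+\mu(G-e)\le\bigl(\alpha(G)+1\bigr)+\mu(G)$, which is precisely $\alpha(G)+\mu(G)\ge n-1$; the companion bound $\alpha(G)+\mu(G)\le n$ is the general inequality recalled in the introduction. The forward implication of part \emph{(ii)} is a sharpening of the same computation: if $\alpha(G)+\mu(G)=n-1$, then for every cycle edge $e$ the two upper estimates $\alpha(G-e)\le\alpha(G)+1$ and $\mu(G-e)\le\mu(G)$ must be simultaneously tight in order to sum to $n$, which forces $\alpha(G-e)=\alpha(G)+1$; hence every edge of $C$ is $\alpha$-critical.

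The converse is where I expect the real work, and I would argue by contradiction. Assume every edge of $C$ is $\alpha$-critical but $G$ is K\"{o}nig-Egerv\'{a}ry, so $\mu(G)=n-\alpha(G)$. For a cycle edge $e$, $\alpha$-criticality gives $\alpha(T_{e})=\alpha(G)+1$, whence the tree identity yields $\mu(G-e)=\mu(T_{e})=n-\alpha(G)-1=\mu(G)-1<\mu(G)$. Thus each cycle edge is $\mu$-critical, and $\mu(G-e)<\mu(G)$ means $e$ lies in \emph{every} maximum matching of $G$ (otherwise such a matching would survive intact in $G-e$). Consequently a single maximum matching would have to contain all $\left\vert V(C)\right\vert $ edges of the cycle at once, which is impossible since consecutive cycle edges share a vertex. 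This contradiction shows $G$ is non-K\"{o}nig-Egerv\'{a}ry, and part \emph{(i)} then pins the value to $\alpha(G)+\mu(G)=n-1$. The delicate point throughout is the conversion of $\alpha$-criticality into $\mu$-criticality through the tree identity; once that bridge is built, the elementary matching obstruction finishes the argument cleanly.
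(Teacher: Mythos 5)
Your proof is correct. Note that the paper itself gives no proof of this lemma --- it is imported verbatim from \cite{Levman2011d} --- so there is nothing to compare against; judged on its own, your argument is sound and self-contained. The spanning-tree identity $\alpha(G-e)+\mu(G-e)=n$ for a cycle edge $e$, combined with the elementary deletion bounds $\alpha(G)\le\alpha(G-e)\le\alpha(G)+1$ and $\mu(G)-1\le\mu(G-e)\le\mu(G)$, correctly yields part \emph{(i)} and the forward direction of \emph{(ii)}; and in the converse your translation of $\alpha$-criticality into $\mu$-criticality of every cycle edge, hence the impossibility of one maximum matching containing two adjacent cycle edges, is a clean and valid contradiction.
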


\begin{theorem}
\cite{Levman2011d}\label{th12} Let $G$ be a unicyclic
non-K\"{o}nig-Egerv\'{a}ry graph. Then the following assertions are true:

\emph{(i) }each $W\in\Omega\left(  T_{x}\right)  $ can be enlarged to some
$S\in\Omega\left(  G\right)  $;

\emph{(ii)} $S\cap V\left(  T_{x}\right)  \in\Omega\left(  T_{x}\right)  $ for
every $S\in\Omega\left(  G\right)  $;

\emph{(iii)} $\mathrm{core}\left(  G\right)  =\cup\left\{  \mathrm{core}%
\left(  T_{x}\right)  :x\in N_{1}\left(  C\right)  \right\}  $.
\end{theorem}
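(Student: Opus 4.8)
The plan is to reduce everything to a single structural fact forced by the non-K\"{o}nig-Egerv\'{a}ry hypothesis. First I would record the shape of $G$: since $G$ is connected with a unique cycle $C$, the components of $G-V(C)$ are exactly the trees $T_x$, $x\in N_1(C)$, each joined to $C$ by its single bridge $xy$; they partition $V(G)-V(C)$, and $G[V(C)]=C$ has no chords. Because a unicyclic graph with an even cycle is bipartite (hence K\"{o}nig-Egerv\'{a}ry), $C$ must be an odd cycle $C_{2k+1}$ with $\alpha(C)=k$, and by Lemma~\ref{lem2}(ii) every edge of $C$ is $\alpha$-critical. The engine of the whole proof is the claim that \emph{for every $x\in N_1(C)$ one has $x\notin\mathrm{core}(T_x)$}, equivalently $\alpha(T_x-x)=\alpha(T_x)$, so that each tree has a maximum independent set avoiding its root.

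The hard part, and the only place the hypothesis is genuinely used, is proving this claim, which I would do by contraposition. Suppose $x\in\mathrm{core}(T_x)$ for some $x$ with bridge $xy$, and let $e=yz$ be a cycle edge at $y$; I claim $\alpha(G-e)=\alpha(G)$, so $e$ is not $\alpha$-critical and $G$ is K\"{o}nig-Egerv\'{a}ry by Lemma~\ref{lem2}(ii), a contradiction. Take $S'\in\Omega(G-e)$. If $S'$ is already independent in $G$ then $|S'|\le\alpha(G)$ and we are done; otherwise $y,z\in S'$, which forces $x\notin S'$, so $|S'\cap V(T_x)|\le\alpha(T_x)-1$ since $x\in\mathrm{core}(T_x)$. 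Replacing $S'\cap V(T_x)$ by a maximum independent set $W\in\Omega(T_x)$ and deleting $y$ yields a set that is independent in $G$ (the one cross edge $xy$ is now safe because $y$ was removed) and no smaller than $S'$; hence $\alpha(G)\ge|S'|=\alpha(G-e)$, giving the equality. This swap is the crux: it converts criticality of the cycle edges into a rigidity statement about the trees.

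With the claim in hand the counting is immediate. For any independent $S$ one has $|S|=|S\cap V(C)|+\sum_x|S\cap V(T_x)|\le k+\sum_x\alpha(T_x)$, and choosing root-avoiding maximum sets $W_x\in\Omega(T_x)$ together with any maximum independent set of $C$ (the bridges cause no conflict since $x\notin W_x$) attains this bound; thus $\alpha(G)=\alpha(C)+\sum_x\alpha(T_x)$. Part (ii) follows at once, because equality in the displayed bound forces $|S\cap V(T_x)|=\alpha(T_x)$ for every $S\in\Omega(G)$. For part (i), given $W\in\Omega(T_x)$ I would keep $W$ on $T_x$, use root-avoiding maximum sets on the remaining trees, and pick a size-$k$ independent set of $C$ that avoids $y$ whenever $x\in W$, which is possible because $C-y=P_{2k}$ still has independence number $k$; the resulting union has size $\alpha(G)$ and restricts to $W$ on $T_x$.

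Finally, part (iii) is assembled from (i) and (ii). The inclusion $\bigcup_x\mathrm{core}(T_x)\subseteq\mathrm{core}(G)$ is clear: if $v\in\mathrm{core}(T_x)$ then by (ii) every $S\in\Omega(G)$ meets $T_x$ in a member of $\Omega(T_x)$, which contains $v$. For the reverse inclusion I first rule out cycle vertices: for any $c\in V(C)$, the set built from root-avoiding tree sets together with a size-$k$ independent set of $C$ avoiding $c$ lies in $\Omega(G)$ and misses $c$, so $c\notin\mathrm{core}(G)$; hence $\mathrm{core}(G)\subseteq V(G)-V(C)=\bigcup_x V(T_x)$. If $v\in\mathrm{core}(G)$ lies in $T_x$, then by (i) every $W\in\Omega(T_x)$ extends to some $S\in\Omega(G)$ with $S\cap V(T_x)=W$, and $v\in S$ forces $v\in W$, so $v\in\mathrm{core}(T_x)$. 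I expect the swap argument establishing $x\notin\mathrm{core}(T_x)$ to be the main obstacle, with everything afterwards amounting to bookkeeping on the bridge constraints.
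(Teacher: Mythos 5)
The paper does not prove Theorem~\ref{th12}; it imports it verbatim from \cite{Levman2011d}, so there is no internal proof to compare against. Judged on its own, your argument is correct and self-contained. The decisive step is your swap lemma: if some root $x$ lay in $\mathrm{core}(T_x)$, then for a cycle edge $e=yz$ at the attachment vertex $y$ any $S'\in\Omega(G-e)$ containing both $y$ and $z$ excludes $x$, hence meets $T_x$ in at most $\alpha(T_x)-1$ vertices, and replacing $S'\cap V(T_x)$ by a full $\Omega(T_x)$-set while deleting $y$ shows $e$ is not $\alpha$-critical, contradicting Lemma~\ref{lem2}\emph{(ii)}. This yields the additive formula $\alpha(G)=\alpha(C)+\sum_x\alpha(T_x)$, from which \emph{(ii)} is forced by equality in the partition bound, \emph{(i)} follows by assembling root-avoiding sets on the other trees and a $k$-set of $C_{2k+1}$ dodging $y$ when needed, and \emph{(iii)} follows from \emph{(i)} and \emph{(ii)} once cycle vertices are excluded from $\mathrm{core}(G)$. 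Two small facts you use implicitly and should state: each component of $G-V(C)$ contains exactly one vertex of $N_1(C)$ and meets $C$ in exactly one bridge (otherwise a second cycle appears), so the $T_x$ really do partition $V(G)-V(C)$; and the degenerate case $G=C_{2k+1}$ (where $N_1(C)=\emptyset$) makes \emph{(i)} and \emph{(ii)} vacuous and \emph{(iii)} the true statement $\mathrm{core}(C_{2k+1})=\emptyset$, which your cycle-vertex exclusion already covers.
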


Unicyclic graphs keep enjoying plenty of interest, as one can see , for
instance, in \cite{Belardo2010}, \cite{Du2010}, \cite{Huo2010},
\cite{LevMan2009a}, \cite{Li2010}, \cite{Wu2010}, \cite{Zhai2010}.

In this paper we analyze the relationship between several parameters of a
unicyclic graph $G$, namely, $\mathrm{core}(G)$, $\mathrm{corona}(G)$,
$\ker\left(  G\right)  $.

\section{Results}

\begin{lemma}
\label{lem1}If $G$ is a unicyclic non-K\"{o}nig-Egerv\'{a}ry graph, then

\emph{(i)} \textrm{core}$(G)\cap N\left[  V\left(  C\right)  \right]
=\emptyset$;

\emph{(ii)} there exists a matching from $N(\mathrm{core}(G))$ into
$\mathrm{core}(G)$.
\end{lemma}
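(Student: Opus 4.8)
The plan is to prove (i) using Lemma~\ref{lem2} to control the cycle and to reduce (ii) to the trees $T_{x}$ via Theorem~\ref{th12}. For (i), the decisive step is to show that \emph{every} vertex of the cycle $C$ lies in $\mathrm{corona}(G)$. Since $G$ is a non-K\"{o}nig-Egerv\'{a}ry unicyclic graph, Lemma~\ref{lem2} forces $\alpha(G)+\mu(G)=n-1$, so each edge of $C$ is $\alpha$-critical. Fix $v\in V(C)$ and an incident cycle edge $e=vv'$; then $\alpha(G-e)=\alpha(G)+1$, and any $S'\in\Omega(G-e)$ has $|S'|=\alpha(G)+1$ and must contain both $v$ and $v'$, for otherwise $S'$ would be independent in $G$, contradicting $|S'|>\alpha(G)$. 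Deleting $v'$ from $S'$ then yields an independent set of $G$ of size $\alpha(G)$ still containing $v$, so $v\in\mathrm{corona}(G)$.

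Part (i) then follows almost formally. For $v\in V(C)$, its cycle-neighbor $v'$ lies in some $S\in\Omega(G)$, and $v\sim v'$ forces $v\notin S$, whence $v\notin\mathrm{core}(G)$. For $x\in N_{1}(C)$ with cycle-neighbor $y\in V(C)$, the set realizing $y\in\mathrm{corona}(G)$ excludes $x$, whence $x\notin\mathrm{core}(G)$. Since $N[V(C)]=V(C)\cup N_{1}(C)$, this gives $\mathrm{core}(G)\cap N[V(C)]=\emptyset$. I expect the corona claim of the first paragraph to be the only genuine obstacle: it is the single point where non-K\"{o}nig-Egerv\'{a}ryness (via $\alpha$-criticality) is actually used, and everything else is bookkeeping.

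For (ii) I would localize everything to the trees $T_{x}$, $x\in N_{1}(C)$. These trees are pairwise vertex-disjoint, contain no cycle vertex, span $V(G)-V(C)$ between them, and carry no edges of $G$ running between distinct trees. Combined with Theorem~\ref{th12}(iii), this reads $\mathrm{core}(G)=\bigcup_{x}\mathrm{core}(T_{x})$ with $\mathrm{core}(T_{x})=\mathrm{core}(G)\cap V(T_{x})$. By part (i) each attachment vertex satisfies $x\notin\mathrm{core}(G)$, so every $c\in\mathrm{core}(T_{x})$ is interior to its tree and therefore has all of its $G$-neighbors inside $T_{x}$; consequently $N_{G}(\mathrm{core}(G))=\bigcup_{x}N_{T_{x}}(\mathrm{core}(T_{x}))$, again a disjoint union over the trees.

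Finally, each $T_{x}$ is a tree, hence bipartite and K\"{o}nig-Egerv\'{a}ry, so Theorem~\ref{th1} supplies a matching $M_{x}$ of $T_{x}$ sending $N_{T_{x}}(\mathrm{core}(T_{x}))$ into $\mathrm{core}(T_{x})$. Because the trees are vertex-disjoint, $M=\bigcup_{x}M_{x}$ is a matching of $G$ that saturates $N_{G}(\mathrm{core}(G))$ with every partner lying in $\mathrm{core}(G)$, which is exactly the matching required by (ii).
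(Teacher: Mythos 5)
Your proof is correct and follows essentially the same route as the paper's: part (i) comes from the $\alpha$-criticality of the cycle edges (Lemma \ref{lem2}(ii)) forcing every vertex of $N[V(C)]$ to be missed by some maximum independent set, and part (ii) is obtained by applying Theorem \ref{th1} to each bipartite tree $T_{x}$ and gluing the resulting matchings via the decomposition of $\mathrm{core}(G)$ from Theorem \ref{th12} together with part (i). You merely spell out a few steps the paper leaves implicit, such as why an $\alpha$-critical edge has both endpoints in $\mathrm{corona}(G)$ and why the trees $T_{x}$ are pairwise disjoint with no edges between them.
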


\begin{proof}
\emph{(i)} Let\emph{ }$ab\in E\left(  C\right)  $. By Lemma \ref{lem2}%
\emph{(ii)}, the edge $ab$ is $\alpha$-critical. Hence there are $S_{a}%
,S_{b}\in\Omega\left(  G\right)  $, such that $a\in S_{a}$ and $b\in S_{b}$.
Since $a\notin S_{b}$, it follows that $a\notin$ \textrm{core}$(G)$, and
because $a\in S_{a}$, we infer that $N\left(  a\right)  \cap$ \textrm{core}%
$(G)=\emptyset$. Consequently, we obtain that \textrm{core}$(G)\cap N\left[
V\left(  C\right)  \right]  =\emptyset$.

\emph{(ii)} If $\mathrm{core}(G)=\emptyset$, then the conclusion is clear.

Assume that $\mathrm{core}(G)\neq\emptyset$. By Theorem \ref{th1}, in each
tree $T_{x}$ there is a matching $M_{x}$ from $N(\mathrm{core}(T_{x}))$ into
$\mathrm{core}(T_{x})$. By part \emph{(i)}, we have that $V(C)\cap N\left[
\mathrm{core}(G)\right]  =\emptyset$. Taking into account Theorem
\ref{th12}\emph{(ii)}, we see that the union of all these matchings $M_{x}$
gives a matching from $N(\mathrm{core}(G))$ into $\mathrm{core}(G)$.
\end{proof}

It is worth mentioning that the assertion in Lemma \ref{lem1}\emph{(ii)} is
true for every K\"{o}nig-Egerv\'{a}ry graph, by Theorem \ref{th1}. The graph
$G_{2}$ from Figure \ref{fig1122} shows that Lemma \ref{lem1}\emph{(i)} may
fail for unicyclic K\"{o}nig-Egerv\'{a}ry graphs. \begin{figure}[h]
\setlength{\unitlength}{1cm}\begin{picture}(5,1.8)\thicklines
\multiput(2,0)(1,0){6}{\circle*{0.29}}
\multiput(3,1)(1,0){3}{\circle*{0.29}}
\put(2,0){\line(1,0){5}}
\put(3,0){\line(0,1){1}}
\put(4,0){\line(0,1){1}}
\put(4,1){\line(1,0){1}}
\put(5,1){\line(1,-1){1}}
\put(2,0.3){\makebox(0,0){$a$}}
\put(3.3,1){\makebox(0,0){$b$}}
\put(7,0.3){\makebox(0,0){$c$}}
\put(1.2,0.5){\makebox(0,0){$G_{1}$}}
\multiput(9,0)(1,0){5}{\circle*{0.29}}
\multiput(11,1)(1,0){2}{\circle*{0.29}}
\put(9,0){\line(1,0){4}}
\put(10,0){\line(1,1){1}}
\put(11,1){\line(1,0){1}}
\put(12,0){\line(0,1){1}}
\put(9,0.3){\makebox(0,0){$x$}}
\put(11,0.3){\makebox(0,0){$y$}}
\put(13,0.3){\makebox(0,0){$z$}}
\put(8.2,0.5){\makebox(0,0){$G_{2}$}}
\end{picture}
\caption{K\"{o}nig-Egerv\'{a}ry graphs with \textrm{core}$(G_{1})=\left\{
a,b,c\right\}  $ and \textrm{core}$(G_{2})=\left\{  x,y,z\right\}  $.}%
\label{fig1122}%
\end{figure}

\begin{theorem}
\label{th4} If $G$ is a K\"{o}nig-Egerv\'{a}ry graph, then

\emph{(i)} \cite{LevMan2003} $N\left(  \mathrm{core}(G)\right)  =\cap
\{V(G)-S:S\in(G)\}$, i.e., $N\left(  \mathrm{core}(G\right)  )=V\left(
G\right)  -\mathrm{corona}(G)$;

\emph{(ii)} \cite{Levman2011b} $\left\vert \mathrm{corona}(G)\right\vert
+\left\vert \mathrm{core}(G)\right\vert =2\alpha\left(  G\right)  $.
\end{theorem}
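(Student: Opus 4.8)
The plan is to establish both parts at once from the behaviour of a single maximum matching. Fix any $S\in\Omega(G)$ together with an arbitrary maximum matching $M$. Because $G$ is K\"{o}nig-Egerv\'{a}ry we have $\left\vert V(G)-S\right\vert =\left\vert V(G)\right\vert -\alpha(G)=\mu(G)=\left\vert M\right\vert $, and since $S$ is independent no edge of $M$ lies inside $S$; counting endpoints then forces each edge of $M$ to meet $V(G)-S$ in exactly one vertex. Hence $M$ saturates $V(G)-S$ and defines an injection $u\mapsto M(u)$ of $V(G)-S$ into $S$. This saturation is the structural backbone of the argument, and it is precisely where the K\"{o}nig-Egerv\'{a}ry hypothesis is used.

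Next I would record where the various vertices of $V(G)-S$ are sent. First, the inclusion $N(\mathrm{core}(G))\subseteq V(G)-\mathrm{corona}(G)$ holds in every graph, since a neighbour of a core vertex cannot belong to any maximum independent set. Second, by Theorem \ref{th1} the matching $M$ sends $N(\mathrm{core}(G))$ into $\mathrm{core}(G)$; conversely, if $M(u)\in\mathrm{core}(G)$ then $u$ is adjacent to a core vertex, so $u\in N(\mathrm{core}(G))$, and thus the vertices of $V(G)-S$ landing in $\mathrm{core}(G)$ are exactly those of $N(\mathrm{core}(G))$. Third, if $u\in\mathrm{corona}(G)-S$ then $M(u)\in S-\mathrm{core}(G)$: otherwise some $S'\in\Omega(G)$ would contain both $u$ and its core partner $M(u)$, contradicting independence. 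Finally, a vertex of $V(G)-\mathrm{corona}(G)$ lying outside $N(\mathrm{core}(G))$ has no core neighbour, so it too is matched into $S-\mathrm{core}(G)$.

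Writing $D=\bigl(V(G)-\mathrm{corona}(G)\bigr)-N(\mathrm{core}(G))$, the three facts partition $V(G)-S$ into $N(\mathrm{core}(G))$, $D$, and $\mathrm{corona}(G)-S$, under which $M$ maps $N(\mathrm{core}(G))$ into $\mathrm{core}(G)$ and maps both $D$ and $\mathrm{corona}(G)-S$ injectively into $S-\mathrm{core}(G)$. Comparing sizes yields $\left\vert D\right\vert +\left(\left\vert \mathrm{corona}(G)\right\vert -\alpha(G)\right)\le\left\vert S-\mathrm{core}(G)\right\vert =\alpha(G)-\left\vert \mathrm{core}(G)\right\vert $. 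On the other hand, Theorem \ref{th11} supplies a matching of $S-\mathrm{core}(G)$ into $\mathrm{corona}(G)-S$, giving the reverse estimate $\alpha(G)-\left\vert \mathrm{core}(G)\right\vert \le\left\vert \mathrm{corona}(G)\right\vert -\alpha(G)$. These two inequalities sandwich the quantities, forcing $\left\vert D\right\vert =0$ and $\left\vert \mathrm{corona}(G)\right\vert -\alpha(G)=\alpha(G)-\left\vert \mathrm{core}(G)\right\vert $. The first conclusion says $V(G)-\mathrm{corona}(G)=N(\mathrm{core}(G))$, which is part \emph{(i)}, and the second is exactly $\left\vert \mathrm{corona}(G)\right\vert +\left\vert \mathrm{core}(G)\right\vert =2\alpha(G)$, which is part \emph{(ii)}. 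I expect the main obstacle to be the partner-tracking in the second paragraph --- verifying that $M$ pushes $\mathrm{corona}(G)-S$ and $D$ into $S-\mathrm{core}(G)$ while exactly $N(\mathrm{core}(G))$ occupies the core side --- since once this bookkeeping is correct the two matchings supplied by Theorems \ref{th1} and \ref{th11} close the argument immediately.
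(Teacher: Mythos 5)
The paper itself gives no proof of Theorem \ref{th4}: both parts are imported by citation from \cite{LevMan2003} and \cite{Levman2011b}, so there is no in-paper argument to compare against. Your proof is correct as a self-contained derivation from the two quoted tools, Theorem \ref{th1} and Theorem \ref{th11}. The opening step (each edge of a maximum matching meets $V(G)-S$ in exactly one vertex, so $M$ injects $V(G)-S$ into $S$) is the standard structural consequence of the K\"{o}nig--Egerv\'{a}ry equality and is where the hypothesis is genuinely used. The partner-tracking in your second paragraph checks out: $N(\mathrm{core}(G))\subseteq V(G)-S$ and is sent into $\mathrm{core}(G)$ by Theorem \ref{th1}; a vertex of $\mathrm{corona}(G)-S$ cannot be matched into $\mathrm{core}(G)$ because some maximum independent set contains it together with all of $\mathrm{core}(G)$; and a vertex of $D$ has no core neighbour at all, so its partner lands in $S-\mathrm{core}(G)$. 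The resulting inequality $\left\vert D\right\vert +\left\vert \mathrm{corona}(G)\right\vert -\alpha(G)\leq\alpha(G)-\left\vert \mathrm{core}(G)\right\vert$, sandwiched against the reverse bound from Theorem \ref{th11}, does force $D=\emptyset$ and the equality of part \emph{(ii)} simultaneously; combined with the trivial inclusion $N(\mathrm{core}(G))\subseteq V(G)-\mathrm{corona}(G)$ this yields part \emph{(i)}. One cosmetic remark: the statement's ``$S\in(G)$'' is a typo for ``$S\in\Omega(G)$'', and you correctly read $\cap\{V(G)-S\}$ as $V(G)-\mathrm{corona}(G)$. It is also worth noting that your argument mirrors, in miniature, the counting scheme the paper later uses for the non-K\"{o}nig-Egerv\'{a}ry unicyclic case (where the same bookkeeping produces the $+1$), so your derivation is very much in the spirit of the paper even though the paper does not spell this case out.
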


\begin{figure}[h]
\setlength{\unitlength}{1cm}\begin{picture}(5,1.3)\thicklines
\multiput(1.5,0)(1,0){6}{\circle*{0.29}}
\multiput(2.5,1)(1,0){4}{\circle*{0.29}}
\put(1.5,0){\line(1,0){5}}
\put(2.5,0){\line(0,1){1}}
\put(3.5,0){\line(0,1){1}}
\put(3.5,1){\line(1,-1){1}}
\put(4.5,0){\line(0,1){1}}
\put(4.5,1){\line(1,0){1}}
\put(5.5,1){\line(1,-1){1}}
\put(1.6,0.3){\makebox(0,0){$a$}}
\put(2.8,1){\makebox(0,0){$b$}}
\put(4.7,0.3){\makebox(0,0){$c$}}
\put(0.6,0.5){\makebox(0,0){$G_1$}}
\multiput(8.5,0)(1,0){5}{\circle*{0.29}}
\multiput(9.5,1)(1,0){3}{\circle*{0.29}}
\put(8.5,0){\line(1,0){4}}
\put(11.5,1){\line(1,-1){1}}
\put(10.5,0){\line(0,1){1}}
\put(10.5,1){\line(1,0){1}}
\put(9.5,0){\line(0,1){1}}
\put(8.5,0.3){\makebox(0,0){$x$}}
\put(9.8,1){\makebox(0,0){$y$}}
\put(7.6,0.5){\makebox(0,0){$G_2$}}
\end{picture}\caption{Non-K\"{o}nig-Egerv\'{a}ry graphs with $\mathrm{core}%
(G_{1})=\left\{  a,b\right\}  $ and $\mathrm{core}(G_{2})=\{x,y\}.$}%
\label{fig5353}%
\end{figure}
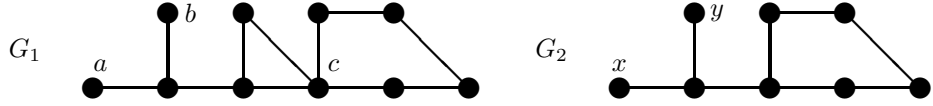

The graphs $G_{1}$, $G_{2}$ from Figure \ref{fig5353} satisfy $\mathrm{corona}%
(G_{1})\cup N\left(  \mathrm{core}(G_{1}\right)  )=V\left(  G_{1}\right)
-\left\{  c\right\}  $, while $\mathrm{corona}(G_{2})\cup N\left(
\mathrm{core}(G_{2}\right)  )=V\left(  G_{2}\right)  $.

\begin{theorem}
\label{th3}If $G$ is unicyclic non-K\"{o}nig-Egerv\'{a}ry graph, then
\begin{gather*}
\mathrm{corona}(G)\cup N\left(  \mathrm{core}(G\right)  )=V\left(  G\right)
,\\
\mathrm{corona}(G)=V\left(  C\right)  \cup\left(  \cup\left\{  \mathrm{corona}%
\left(  T_{x}\right)  :x\in N_{1}\left(  C\right)  \right\}  \right)  .
\end{gather*}

\end{theorem}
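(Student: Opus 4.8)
The plan is to prove the second identity first, since it pins down $\mathrm{corona}(G)$ in terms of the pendant trees, and then to derive the first identity from it together with the König--Egerv\'ary structure of each $T_x$. Throughout I will exploit that, because $G$ is unicyclic, its vertex set splits as a disjoint union $V(G)=V(C)\cup\bigcup\{V_x:x\in N_1(C)\}$: each edge $xy$ with $x\in N_1(C)$, $y\in V(C)$ is a bridge, since it cannot lie on the unique cycle $C$ (its endpoint $x$ is off $C$) and there are no other cycles; hence removing it separates $x$ from $C$, the trees $T_x$ are pairwise vertex-disjoint and disjoint from $V(C)$, and together with $V(C)$ they exhaust $V(G)$.

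For the inclusion $V(C)\cup\bigcup_x\mathrm{corona}(T_x)\subseteq\mathrm{corona}(G)$, I would first note that every cycle vertex lies in some maximum independent set: by Lemma \ref{lem2}\emph{(ii)} each cycle edge $ab$ is $\alpha$-critical, which (exactly as in the proof of Lemma \ref{lem1}\emph{(i)}) produces $S_a,S_b\in\Omega(G)$ with $a\in S_a$ and $b\in S_b$, so $V(C)\subseteq\mathrm{corona}(G)$. Next, if $v\in\mathrm{corona}(T_x)$ then $v$ lies in some $W\in\Omega(T_x)$, which extends to a member of $\Omega(G)$ by Theorem \ref{th12}\emph{(i)}, giving $v\in\mathrm{corona}(G)$. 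For the reverse inclusion I would take an arbitrary $S\in\Omega(G)$ and split it along the decomposition above: $S\cap V(C)\subseteq V(C)$, while $S\cap V_x\in\Omega(T_x)\subseteq\mathrm{corona}(T_x)$ by Theorem \ref{th12}\emph{(ii)}. Hence $S\subseteq V(C)\cup\bigcup_x\mathrm{corona}(T_x)$, and taking the union over all $S\in\Omega(G)$ yields the second identity.

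For the first identity the inclusion into $V(G)$ is immediate, so the content is $V(G)\subseteq\mathrm{corona}(G)\cup N(\mathrm{core}(G))$. I would let $v\notin\mathrm{corona}(G)$; since $V(C)\subseteq\mathrm{corona}(G)$, the vertex $v$ lies in some $V_x$, and since $\mathrm{corona}(T_x)\subseteq\mathrm{corona}(G)$ we get $v\in V(T_x)-\mathrm{corona}(T_x)$. Now $T_x$ is a tree, hence bipartite, hence a König--Egerv\'ary graph, so Theorem \ref{th4}\emph{(i)} applies to it and gives $V(T_x)-\mathrm{corona}(T_x)=N_{T_x}(\mathrm{core}(T_x))$. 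Thus $v$ has a neighbor in $\mathrm{core}(T_x)$ (an edge of $T_x$, hence of $G$), and by Theorem \ref{th12}\emph{(iii)} that neighbor lies in $\mathrm{core}(G)$; therefore $v\in N(\mathrm{core}(G))$, which finishes the argument.

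The key idea, and the only place where something nonobvious happens, is organizing the proof so that the second identity comes first: once $V(G)-\mathrm{corona}(G)$ is confined to the pendant trees, the problem reduces to the König--Egerv\'ary case, where Theorem \ref{th4}\emph{(i)} converts ``outside the corona'' into ``neighbor of the core'' inside each $T_x$, and Theorem \ref{th12}\emph{(iii)} lifts this back to $\mathrm{core}(G)$. The one step demanding care is the disjoint decomposition of $V(G)$ into $V(C)$ and the trees $V_x$, which rests on the bridge property of the edges joining $N_1(C)$ to the cycle; everything else is bookkeeping along that decomposition.
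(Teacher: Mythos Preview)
Your proof is correct and uses essentially the same ingredients as the paper's: Lemma \ref{lem2}\emph{(ii)} to place $V(C)$ in $\mathrm{corona}(G)$, Theorem \ref{th12}\emph{(i)},\emph{(ii)} to pass between $\Omega(T_x)$ and $\Omega(G)$, Theorem \ref{th4}\emph{(i)} applied to each tree $T_x$, and Theorem \ref{th12}\emph{(iii)} to lift $\mathrm{core}(T_x)$ to $\mathrm{core}(G)$. The only difference is the order: the paper proves the first identity directly and then derives the second, whereas you establish the second identity first and use it to streamline the first; this is a cosmetic rearrangement rather than a genuinely different route, and your framing of the reordering as ``the key idea'' somewhat overstates its importance, since the paper's order works just as smoothly.
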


\begin{proof}
For the first equality, it is enough to show that $V\left(  G\right)
\subseteq\mathrm{corona}\left(  G\right)  \cup N\left(  \mathrm{core}\left(
G\right)  \right)  $.

Let $a\in V\left(  G\right)  $.

\textit{Case 1}. $a\in V\left(  C\right)  $. If $b\in N\left(  a\right)  \cap
V\left(  C\right)  $, then, by Lemma \ref{lem2}\emph{(ii)}, the edge $ab$ is
$\alpha$-critical. Hence $a\in\mathrm{corona}\left(  G\right)  $.

\textit{Case 2}. $a\in V\left(  G\right)  -V\left(  C\right)  $. Since
$V\left(  G\right)  =V\left(  C\right)  \cup\left(  \cup\left\{  V\left(
T_{x}\right)  :x\in N_{1}\left(  C\right)  \right\}  \right)  $, it follows
that there is some $y\in N_{1}\left(  C\right)  $, such that $a\in V\left(
T_{y}\right)  $. According to Theorem \ref{th4}\emph{(i)}, we know that
$V\left(  T_{y}\right)  =\mathrm{corona}(T_{y})\cup N\left(  \mathrm{core}%
(T_{y}\right)  )$.

By Theorem \ref{th12}\emph{(i)}, $\mathrm{corona}\left(  T_{x}\right)
\subseteq\mathrm{corona}\left(  G\right)  $ for every $x\in N_{1}\left(
C\right)  $. Therefore, either $a\in\mathrm{corona}\left(  T_{y}\right)
\subseteq\mathrm{corona}\left(  G\right)  $, or $a\in N\left(  \mathrm{core}%
\left(  T_{y}\right)  \right)  \subseteq N\left(  \mathrm{core}\left(
G\right)  \right)  $, because $\mathrm{core}\left(  T_{y}\right)
\subseteq\mathrm{core}(G)$ in accordance with Theorem \ref{th12}\emph{(iii)}.

Consequently, $a\in\mathrm{corona}\left(  G\right)  \cup N\left(
\mathrm{core}\left(  G\right)  \right)  $. In other words, we get that
\[
V\left(  G\right)  \subseteq\mathrm{corona}\left(  G\right)  \cup N\left(
\mathrm{core}\left(  G\right)  \right)  ,
\]
as required.

As for the second equality, let us notice that $V\left(  C\right)
\subseteq\mathrm{corona}\left(  G\right)  $, by \textit{Case 1}. If
$a\in\mathrm{corona}\left(  G\right)  -V\left(  C\right)  $, then by Theorem
\ref{th12}\emph{(ii)}, there are $S\in\Omega\left(  G\right)  $ and $b\in
N_{1}\left(  C\right)  $, such that $a\in S\cap V\left(  T_{b}\right)
\in\Omega\left(  T_{b}\right)  $. Hence $a\in\mathrm{corona}\left(
T_{x}\right)  $, and therefore,
\[
\mathrm{corona}(G)-V\left(  C\right)  \subseteq\cup\left\{  \mathrm{corona}%
\left(  T_{x}\right)  :x\in N_{1}\left(  C\right)  \right\}  .
\]
Theorem \ref{th12}\emph{(i)} assures that $\cup\left\{  \mathrm{corona}\left(
T_{x}\right)  :x\in N_{1}\left(  C\right)  \right\}  \subseteq\mathrm{corona}%
\left(  G\right)  $. In conclusion, $\mathrm{corona}(G)=V\left(  C\right)
\cup\left(  \cup\left\{  \mathrm{corona}\left(  T_{x}\right)  :x\in
N_{1}\left(  C\right)  \right\}  \right)  $.
\end{proof}

The graph $G_{2}$ from Figure \ref{fig5353} shows that the equality
$\left\vert \mathrm{corona}(G)\right\vert +\left\vert \mathrm{core}%
(G)\right\vert =2\alpha\left(  G\right)  $ is not true for unicyclic
non-K\"{o}nig-Egerv\'{a}ry graphs.

\begin{theorem}
If $G$ is a unicyclic graph, then
\[
2\alpha\left(  G\right)  \ \leq\left\vert \mathrm{corona}(G)\right\vert
+\left\vert \mathrm{core}(G)\right\vert \leq2\alpha\left(  G\right)  +1.
\]

Moreover, $G$ is a non-K\"{o}nig-Egerv\'{a}ry graph if and only if
\[
\left\vert \mathrm{corona}(G)\right\vert +\left\vert \mathrm{core}%
(G)\right\vert =2\alpha\left(  G\right)  +1.
\]

\end{theorem}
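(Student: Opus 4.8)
The plan is to treat the two bounds and the characterisation simultaneously, separating the argument according to whether $G$ is a K\"onig-Egerv\'ary graph. First I would record that the lower bound $2\alpha(G)\le|\mathrm{corona}(G)|+|\mathrm{core}(G)|$ holds for \emph{every} graph: fixing $S\in\Omega(G)$, Theorem \ref{th11} supplies a matching from $S-\mathrm{core}(G)$ into $\mathrm{corona}(G)-S$, and since $\mathrm{core}(G)\subseteq S\subseteq\mathrm{corona}(G)$ this gives $\alpha(G)-|\mathrm{core}(G)|\le|\mathrm{corona}(G)|-\alpha(G)$, i.e.\ $2\alpha(G)\le|\mathrm{corona}(G)|+|\mathrm{core}(G)|$. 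If $G$ is a K\"onig-Egerv\'ary graph, Theorem \ref{th4}(ii) gives the equality $|\mathrm{corona}(G)|+|\mathrm{core}(G)|=2\alpha(G)$, which settles both displayed bounds in that case and shows the value is \emph{not} $2\alpha(G)+1$. Hence everything reduces to proving that a unicyclic non-K\"onig-Egerv\'ary graph satisfies exactly $|\mathrm{corona}(G)|+|\mathrm{core}(G)|=2\alpha(G)+1$.

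For that case I would exploit the partition $V(G)=V(C)\cup\bigcup_{x\in N_1(C)}V(T_x)$ into the cycle and its hanging trees, which are pairwise vertex-disjoint and disjoint from $V(C)$ since $G$ is unicyclic. Writing $g=|V(C)|$, Theorem \ref{th12}(iii) gives $\mathrm{core}(G)=\bigcup_x\mathrm{core}(T_x)$ and Theorem \ref{th3} gives $\mathrm{corona}(G)=V(C)\cup\bigcup_x\mathrm{corona}(T_x)$, all unions being disjoint. Each $T_x$ is a tree, hence bipartite, hence a K\"onig-Egerv\'ary graph, so Theorem \ref{th4}(ii) applies to it; summing yields $|\mathrm{corona}(G)|+|\mathrm{core}(G)|=g+2\sum_x\alpha(T_x)$.

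It then remains to compute $\alpha(G)$. By Theorem \ref{th12}(ii) every $S\in\Omega(G)$ meets each $T_x$ in a maximum independent set, so $\alpha(G)=|S\cap V(C)|+\sum_x\alpha(T_x)$; in particular $m:=|S\cap V(C)|$ is the same for all $S\in\Omega(G)$, and substituting gives $|\mathrm{corona}(G)|+|\mathrm{core}(G)|=2\alpha(G)+(g-2m)$. Thus the entire theorem comes down to the single identity $g-2m=1$, i.e.\ $m=(g-1)/2$. Here I would first note that $G$, being non-K\"onig-Egerv\'ary, is not bipartite (bipartite graphs are K\"onig-Egerv\'ary), so its unique cycle is odd; consequently $S\cap V(C)$, an independent set of the odd cycle $C$, has $m\le\lfloor g/2\rfloor=(g-1)/2$. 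For the reverse inequality I would construct an independent set of the target size: take any maximum independent set $I_C$ of $C$ (of size $(g-1)/2$) together with, in each tree, a maximum independent set $W_x\in\Omega(T_x)$ that \emph{avoids its root} $x$.

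The crux, and the step I expect to be the main obstacle, is guaranteeing these $W_x$ exist and cause no conflict with $I_C$ at the attachment points. Existence follows because no tree root lies in its tree-core: if $x\in\mathrm{core}(T_x)$ then $x\in\mathrm{core}(G)$ by Theorem \ref{th12}(iii), yet $x\in N_1(C)\subseteq N[V(C)]$, contradicting Lemma \ref{lem1}(i); hence some $W_x\in\Omega(T_x)$ omits $x$. Since each $T_x$ is joined to $C$ only through the single edge at $x$, and $x\notin W_x$, the set $I_C\cup\bigcup_x W_x$ is independent no matter which cycle vertices lie in $I_C$, and its size is $(g-1)/2+\sum_x\alpha(T_x)$; therefore $\alpha(G)\ge(g-1)/2+\sum_x\alpha(T_x)$, forcing $m\ge(g-1)/2$ and so $m=(g-1)/2$. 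This yields $|\mathrm{corona}(G)|+|\mathrm{core}(G)|=2\alpha(G)+1$ in the non-K\"onig-Egerv\'ary case. Combining this with the K\"onig-Egerv\'ary computation establishes both bounds, and, since a unicyclic graph is either K\"onig-Egerv\'ary (value $2\alpha(G)$) or not (value $2\alpha(G)+1$), also the stated equivalence.
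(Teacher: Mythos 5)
Your proposal is correct, but it reaches the non-K\"{o}nig-Egerv\'{a}ry case by a genuinely different route than the paper. The paper works globally: it uses the partition $V(G)=S\cup(\mathrm{corona}(G)-S)\cup N(\mathrm{core}(G))$ coming from Theorem \ref{th3} together with $\left\vert V(G)\right\vert=\alpha(G)+\mu(G)+1$, then plays two matchings against each other --- the matching from $S-\mathrm{core}(G)$ into $\mathrm{corona}(G)-S$ of Theorem \ref{th11} and the matching from $N(\mathrm{core}(G))$ into $\mathrm{core}(G)$ of Lemma \ref{lem1}\emph{(ii)}, extended to a maximum matching --- to force $\left\vert S-\mathrm{core}(G)\right\vert=\left\vert \mathrm{corona}(G)-S\right\vert-1$. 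You instead localize everything to the cycle--tree decomposition: the second identity of Theorem \ref{th3} and Theorem \ref{th12}\emph{(iii)} reduce the sum $\left\vert \mathrm{corona}(G)\right\vert+\left\vert \mathrm{core}(G)\right\vert$ to $g+2\sum_{x}\alpha(T_{x})$ via Theorem \ref{th4}\emph{(ii)} applied to each (bipartite, hence K\"{o}nig-Egerv\'{a}ry) tree, and the whole theorem collapses to showing that every maximum independent set meets the odd cycle in exactly $(g-1)/2$ vertices; your construction of such a set (using Lemma \ref{lem1}\emph{(i)} to find $W_{x}\in\Omega(T_{x})$ avoiding the root $x$) is sound, since each $T_{x}$ meets the rest of $G$ only through the single edge at $x$. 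The paper's argument stays within its matching-theoretic machinery and avoids any parity discussion; yours is more structural and yields a sharper byproduct, namely the exact value $\left\vert S\cap V(C)\right\vert=(g-1)/2$ for all $S\in\Omega(G)$ and the general identity $\left\vert \mathrm{corona}(G)\right\vert+\left\vert \mathrm{core}(G)\right\vert=2\alpha(G)+(g-2m)$, which makes the dichotomy between the two cases transparent. Your separate verification of the lower bound for arbitrary graphs via Theorem \ref{th11} is correct but, as you note implicitly, redundant once both exact values are computed.
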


\begin{proof}
By Theorem \ref{th4}\emph{(ii)}, the equality $2\alpha\left(  G\right)
=\left\vert \mathrm{corona}(G)\right\vert +\left\vert \mathrm{core}%
(G)\right\vert $ holds for every unicyclic K\"{o}nig-Egerv\'{a}ry graph $G$.

Assume now that $G$ is not a K\"{o}nig-Egerv\'{a}ry graph.

Let $S\in\Omega\left(  G\right)  $. According to Theorem \ref{th3} and Lemma
\ref{lem2}\emph{(i)}, we infer that
\[
\left\vert S\right\vert +\left\vert \mathrm{corona}(G)-S\right\vert
+\left\vert N\left(  \mathrm{core}\left(  G\right)  \right)  \right\vert
=\left\vert V\left(  G\right)  \right\vert =\alpha\left(  G\right)
+\mu\left(  G\right)  +1,
\]
which implies $\left\vert \mathrm{corona}(G)-S\right\vert +\left\vert N\left(
\mathrm{core}\left(  G\right)  \right)  \right\vert =\mu\left(  G\right)  +1$.

By Theorem \ref{th11}, there is a matching $M_{1}$ from $S-\mathrm{core}%
\left(  G\right)  $ into $\mathrm{corona}(G)-S$, which implies $\left\vert
S-\mathrm{core}\left(  G\right)  \right\vert \leq\left\vert \mathrm{corona}%
(G)-S\right\vert $.

Lemma \ref{lem2} implies that there is a matching $M$ from $N\left(
\mathrm{core}\left(  G\right)  \right)  $ into $\mathrm{core}\left(  G\right)
$, that can be enlarged to a maximum matching, say $M_{2}$, of $G$.

Since $M_{2}$ matches $\mu\left(  G\right)  $ vertices from $A=\left(
\mathrm{corona}(G)-S\right)  \cup\left(  N\left(  \mathrm{core}\left(
G\right)  \right)  \right)  $ by means of $\mu\left(  G\right)  $\ edges, and
$\left\vert A\right\vert =\mu\left(  G\right)  +1$, it follows that $M_{2}-M$
matches $\left\vert \left(  \mathrm{corona}(G)-S\right)  \right\vert -1$
vertices from $A$ into $S-\mathrm{core}\left(  G\right)  $, because $M$
saturates $N\left(  \mathrm{core}\left(  G\right)  \right)  $ and no edge
joins a vertex of $\mathrm{core}\left(  G\right)  $ to some vertex from
$\mathrm{corona}(G)-S$. Hence, taking into account that $M\cup M_{1}$ is a
matching of $G$, while $M_{2}$ is a maximum matching, we obtain
\begin{gather*}
\mu\left(  G\right)  =\left\vert M_{2}\right\vert =\left\vert N\left(
\mathrm{core}\left(  G\right)  \right)  \right\vert +\left\vert
\mathrm{corona}(G)-S\right\vert -1\leq\\
\leq\left\vert N\left(  \mathrm{core}\left(  G\right)  \right)  \right\vert
+\left\vert S-\mathrm{core}\left(  G\right)  \right\vert =\left\vert
M\right\vert +\left\vert M_{1}\right\vert \leq\mu\left(  G\right)  ,
\end{gather*}
which implies $\left\vert S-\mathrm{core}\left(  G\right)  \right\vert
=\left\vert \mathrm{corona}(G)-S\right\vert -1$.

Finally, we infer that
\begin{gather*}
\left\vert \mathrm{corona}(G)\right\vert =\left\vert S\cup\left(
\mathrm{corona}(G)-S\right)  \right\vert =\alpha\left(  G\right)  +\left\vert
\mathrm{corona}(G)-S\right\vert =\\
=\alpha\left(  G\right)  +\left\vert S-\mathrm{core}\left(  G\right)
\right\vert +1=2\alpha\left(  G\right)  -\left\vert \mathrm{core}\left(
G\right)  \right\vert +1,
\end{gather*}
and this completes the proof.
\end{proof}

\begin{theorem}
If $G$ is a unicyclic non-K\"{o}nig-Egerv\'{a}ry graph, then%
\[
\mathrm{\ker}\left(  G\right)  =\cup\left\{  \mathrm{\ker}\left(
T_{x}\right)  :x\in N_{1}\left(  C\right)  \right\}  =\mathrm{core}(G).
\]

\end{theorem}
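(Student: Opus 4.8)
The plan is to reduce both equalities to the bipartite branches $T_{x}$ and to prove an additivity formula for the critical independence difference. I would first dispose of the right-hand equality: each $T_{x}$ is a tree, hence bipartite, so Theorem \ref{th2}\emph{(ii)} gives $\ker(T_{x})=\mathrm{core}(T_{x})$; taking the union over $x\in N_{1}(C)$ and invoking Theorem \ref{th12}\emph{(iii)} yields $\cup\{\ker(T_{x}):x\in N_{1}(C)\}=\cup\{\mathrm{core}(T_{x}):x\in N_{1}(C)\}=\mathrm{core}(G)$. It then remains to prove $\ker(G)=\mathrm{core}(G)$. Since $\ker(G)\subseteq\mathrm{core}(G)$ by Theorem \ref{th2}\emph{(i)}, the whole content is the reverse inclusion, which I would obtain by showing that \emph{every} critical independent set $I$ of $G$ contains $\mathrm{core}(G)=\cup\{\ker(T_{x})\}$; intersecting over all such $I$ then gives $\mathrm{core}(G)\subseteq\ker(G)$.

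The technical heart is the claim $id_{c}(G)=\sum_{x\in N_{1}(C)}id_{c}(T_{x})$. For an independent set $I$ write $I_{C}=I\cap V(C)$ and $I_{x}=I\cap V(T_{x})$. Because the branches are pairwise vertex-disjoint and disjoint from $V(C)$, the neighborhoods $N_{T_{x}}(I_{x})$ computed inside the trees are pairwise disjoint and avoid $V(C)$; setting $P=N(I)-\cup_{x}N_{T_{x}}(I_{x})$ one gets $d(I)=(|I_{C}|-|P|)+\sum_{x}(|I_{x}|-|N_{T_{x}}(I_{x})|)$. The point is that $P$ contains all neighbors of $I_{C}$ lying on $C$, and for an independent set on a cycle this within-cycle neighborhood has size at least $|I_{C}|$ (send each vertex of $I_{C}$ to its clockwise neighbor on $C$, an injection into $V(C)-I_{C}$). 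Hence $|I_{C}|-|P|\leq0$ and $d(I)\leq\sum_{x}id_{c}(T_{x})$, so $id_{c}(G)\leq\sum_{x}id_{c}(T_{x})$. For the matching lower bound I would test the set $\mathrm{core}(G)=\cup_{x}\ker(T_{x})$: it is independent in $G$, and by Lemma \ref{lem1}\emph{(i)} it avoids $N[V(C)]$, so in particular it contains no branch-root $x$ and leaks no neighbor onto the cycle; therefore its difference equals $\sum_{x}id_{c}(T_{x})$, giving the reverse inequality.

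With additivity in hand the structure of critical sets falls out. Taking any critical independent set $I$ of $G$, the displayed identity forces $\sum_{x}(id_{c}(T_{x})-(|I_{x}|-|N_{T_{x}}(I_{x})|))+(|P|-|I_{C}|)=0$; as both bracketed groups are nonnegative, every term vanishes. In particular each $I_{x}$ satisfies $|I_{x}|-|N_{T_{x}}(I_{x})|=id_{c}(T_{x})$, i.e. $I_{x}$ is a critical independent set of $T_{x}$, whence $\ker(T_{x})\subseteq I_{x}\subseteq I$. Thus $\mathrm{core}(G)=\cup_{x}\ker(T_{x})\subseteq I$ for every critical independent set $I$, and so $\mathrm{core}(G)\subseteq\ker(G)$, which combined with Theorem \ref{th2}\emph{(i)} finishes the argument.

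I expect the main obstacle to be the additivity step, and within it the upper bound: one must express $N(I)$ without double counting across the branches and the cycle, and the whole estimate hinges on the cycle inequality that an independent $I_{C}$ on $C$ has at least $|I_{C}|$ neighbors on $C$, which is exactly what absorbs the cycle's contribution. The reduction to bipartite branches, the lower bound, and the concluding intersection over all critical independent sets are then routine.
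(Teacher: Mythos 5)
Your proof is correct, but it takes a genuinely different route from the paper's. Both arguments share the same skeleton: $\ker(T_{x})=\mathrm{core}(T_{x})$ by bipartiteness of the branches, $\mathrm{core}(G)=\cup_{x}\mathrm{core}(T_{x})$ by Theorem \ref{th12}\emph{(iii)}, $\ker(G)\subseteq\mathrm{core}(G)$ by Theorem \ref{th2}\emph{(i)}, and the disjointness of $\mathrm{core}(G)$ from $N[V(C)]$ (Lemma \ref{lem1}\emph{(i)}) to make the difference function additive over the branches. Where you diverge is in how the reverse inclusion is obtained. The paper analyzes only the single set $\ker(G)$: it writes $A_{x}=\ker(G)\cap V(T_{x})\subseteq\ker(T_{x})$ and runs an exchange argument, swapping a deficient piece $A_{q}$ for $\ker(T_{q})$ to produce an independent set of strictly larger difference, a contradiction. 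Because $\ker(G)$ already lives inside $\mathrm{core}(G)$ and hence off the cycle, the paper never has to account for cycle vertices at all. You instead prove the stronger statement that \emph{every} critical independent set $I$ of $G$ contains $\cup_{x}\ker(T_{x})$, via the additivity formula $id_{c}(G)=\sum_{x}id_{c}(T_{x})$; this forces you to control the cycle part $I_{C}$ of an arbitrary critical set, which you do correctly with the clockwise-neighbor injection showing $\left\vert N(I_{C})\cap V(C)\right\vert \geq\left\vert I_{C}\right\vert $, and the equality analysis then shows each restriction $I_{x}$ is critical in $T_{x}$, hence contains $\ker(T_{x})$. Your version costs an extra estimate on the cycle but buys more: the additivity of the critical independence difference over the branches and the fact that every critical independent set of $G$ restricts to a critical independent set of each $T_{x}$, neither of which is stated explicitly in the paper.
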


\begin{proof}
Since $T_{x}$ is bipartite, by Theorem \ref{th2}\emph{(ii) }implies that
$\ker\left(  T_{x}\right)  =\mathrm{core}(T_{x})$, for every $x\in
N_{1}\left(  C\right)  $.

According to Theorems \ref{th2}\emph{(i)} and \ref{th12}\emph{(iii)}, it
follows that
\[
\mathrm{\ker}\left(  G\right)  \subseteq\mathrm{core}(G)=\cup\left\{
\mathrm{core}\left(  T_{x}\right)  :x\in N_{1}\left(  C\right)  \right\}  .
\]
Hence $A_{x}=\mathrm{\ker}\left(  G\right)  \cap V\left(  T_{x}\right)
\subseteq\mathrm{core}(G)\cap V\left(  T_{x}\right)  =\mathrm{core}\left(
T_{x}\right)  =\ker\left(  T_{x}\right)  $, for every $x\in N_{1}\left(
C\right)  $.

Assume that $A_{q}\neq\ker\left(  T_{q}\right)  $ for some $q\in N_{1}\left(
C\right)  $. It follows that $d\left(  A_{q}\right)  <d\left(  \ker\left(
T_{q}\right)  \right)  $.

Since, by Theorem \ref{th12}\emph{(i)}, we have $\mathrm{\ker}\left(
G\right)  \subseteq\mathrm{core}(G)$, Lemma \ref{lem2}\emph{(ii) }ensures that
$N\left[  V\left(  C\right)  \right]  \cap$ $\mathrm{\ker}\left(  G\right)
=\emptyset$. Consequently, the set $W=\left(  \mathrm{\ker}\left(  G\right)
-A_{q}\right)  \cup\ker\left(  T_{q}\right)  $ is independent, and satisfies%
\begin{gather*}
d\left(  \mathrm{\ker}\left(  G\right)  \right)  =d\left(  \cup\left\{
\mathrm{\ker}\left(  G\right)  \cap V\left(  T_{x}\right)  :x\in N_{1}\left(
C\right)  \right\}  \right)  =\\
=\sum\limits_{x\in N_{1}\left(  C\right)  }d\left(  \mathrm{\ker}\left(
G\right)  \cap V\left(  T_{x}\right)  \right)  =\sum\limits_{x\in N_{1}\left(
C\right)  }d\left(  A_{x}\right)  =d\left(  A_{q}\right)  +\sum\limits_{x\in
N_{1}\left(  C\right)  -\left\{  q\right\}  }d\left(  A_{x}\right)  <\\
<d\left(  \ker\left(  T_{q}\right)  \right)  +\sum\limits_{x\in N_{1}\left(
C\right)  -\left\{  q\right\}  }d\left(  A_{x}\right)  =d\left(  W\right)
\leq\max\left\{  d\left(  X\right)  :X\subseteq V\left(  G\right)  \right\}
=d\left(  \mathrm{\ker}\left(  G\right)  \right)  ,
\end{gather*}
which is a contradiction.

Therefore, we infer that
\[
\mathrm{\ker}\left(  G\right)  \cap V\left(  T_{x}\right)  =\mathrm{core}%
(G)\cap V\left(  T_{x}\right)  =\mathrm{core}\left(  T_{x}\right)
=\ker\left(  T_{x}\right)
\]
hold for each $x\in N_{1}\left(  C\right)  $. Hence,
\[
\mathrm{\ker}\left(  G\right)  =\mathrm{core}(G)=\cup\left\{  \mathrm{core}%
\left(  T_{x}\right)  :x\in N_{1}\left(  C\right)  \right\}  =\cup\left\{
\ker\left(  T_{x}\right)  :x\in N_{1}\left(  C\right)  \right\}  ,
\]
as claimed.
\end{proof}

\begin{remark}
If $G$ is a unicyclic K\"{o}nig-Egerv\'{a}ry graph that is non-bipartite, then
the difference between $\left\vert \mathrm{core}(G)\right\vert $ and
$\left\vert \mathrm{\ker}\left(  G\right)  \right\vert $ may equal any
non-negative integer. For instance, the graph $G_{2k+1}$ from Figure \ref{12}
satisfies $\alpha\left(  G_{2k+1}\right)  =k+3$, $\mu\left(  G_{2k+1}\right)
=k+1$, while $\left\vert \mathrm{core}(G_{2k+1})\right\vert -\left\vert
\mathrm{\ker}\left(  G_{2k+1}\right)  \right\vert =k-1,k\geq1$.
\end{remark}

\begin{figure}[h]
\setlength{\unitlength}{1cm}\begin{picture}(5,1.8)\thicklines
\multiput(4,0.5)(1,0){9}{\circle*{0.29}}
\multiput(10,0.5)(0.2,0){5}{\circle*{0.12}}
\put(5,1.5){\circle*{0.29}}
\put(11,1.5){\circle*{0.29}}
\put(4,0.5){\line(1,0){6}}
\put(5,0.5){\line(0,1){1}}
\put(11,1.5){\line(1,-1){1}}
\put(11,0.5){\line(0,1){1}}
\put(11,0.5){\line(1,0){1}}
\put(2.5,1){\makebox(0,0){$G_{2k+1}$}}
\put(4,0.1){\makebox(0,0){$x$}}
\put(5,0.1){\makebox(0,0){$y$}}
\put(5.3,1.5){\makebox(0,0){$z$}}
\put(6,0.1){\makebox(0,0){$v_{1}$}}
\put(7,0.1){\makebox(0,0){$v_{2}$}}
\put(8,0.1){\makebox(0,0){$v_{3}$}}
\put(9,0.1){\makebox(0,0){$v_{4}$}}
\put(10,0.1){\makebox(0,0){$v_{5}$}}
\put(11,0.1){\makebox(0,0){$v_{2k}$}}
\put(12,0.1){\makebox(0,0){$v_{2k+1}$}}
\end{picture}\caption{$\ker\left(  G_{2k+1}\right)  =\left\{  x,z\right\}  $,
while \textrm{core}$\left(  G_{2k+1}\right)  =\left\{  x,z,v_{1}%
,v_{3},...,v_{2k-1}\right\}  $.}%
\label{12}%
\end{figure}

\section{Conclusions}

The equality $\mathrm{core}(G)=\mathrm{\ker}\left(  G\right)  $ may fail for
some non-bipartite unicyclic K\"{o}nig-Egerv\'{a}ry graphs; e.g., the graphs
$G_{1}$ and $G_{2}$ from Figure \ref{fig1122} satisfy $\ker\left(
G_{1}\right)  =\left\{  a,b\right\}  \neq$ \textrm{core}$(G_{1})=\left\{
a,b,c\right\}  $, while $\ker\left(  G_{2}\right)  =$ \textrm{core}%
$(G_{2})=\left\{  x,y,z\right\}  $.

\begin{problem}
Characterize non-bipartite unicyclic K\"{o}nig-Egerv\'{a}ry graphs $G$
satisfying $\mathrm{core}(G)=\mathrm{\ker}\left(  G\right)  $.{}
\end{problem}

The non-unicyclic graphs $G_{1}$ and $G_{2}$ from Figure \ref{fig11} satisfy
$\left\vert \mathrm{corona}(G_{1})\right\vert +\left\vert \mathrm{core}%
(G_{1})\right\vert =2\alpha\left(  G_{1}\right)  $ and $\left\vert
\mathrm{corona}(G_{2})\right\vert +\left\vert \mathrm{core}(G_{2})\right\vert
=2\alpha\left(  G_{2}\right)  +1$. \begin{figure}[h]
\setlength{\unitlength}{1cm}\begin{picture}(5,1.8)\thicklines
\multiput(2,0)(1,0){6}{\circle*{0.29}}
\multiput(2,1)(1,0){2}{\circle*{0.29}}
\multiput(6,1)(1,0){2}{\circle*{0.29}}
\put(2,0){\line(1,0){5}}
\put(2,0){\line(0,1){1}}
\put(2,1){\line(1,0){1}}
\put(3,1){\line(1,-1){1}}
\put(6,0){\line(0,1){1}}
\put(6,1){\line(1,0){1}}
\put(7,0){\line(0,1){1}}
\put(5,0.3){\makebox(0,0){$a$}}
\put(5.7,1){\makebox(0,0){$b$}}
\put(6.7,0.3){\makebox(0,0){$c$}}
\put(1.2,0.5){\makebox(0,0){$G_{1}$}}
\multiput(9,0)(1,0){5}{\circle*{0.29}}
\multiput(9,1)(1,0){2}{\circle*{0.29}}
\multiput(12,1)(1,0){2}{\circle*{0.29}}
\put(9,0){\line(1,0){4}}
\put(9,1){\line(1,0){1}}
\put(9,0){\line(0,1){1}}
\put(10,1){\line(1,-1){1}}
\put(12,1){\line(1,0){1}}
\put(12,0){\line(0,1){1}}
\put(13,0){\line(0,1){1}}
\put(8.2,0.5){\makebox(0,0){$G_{2}$}}
\end{picture}\caption{ \textrm{core}$(G_{1})=\left\{  a,b,c\right\}  $ and
\textrm{core}$(G_{2})=\emptyset$.}%
\label{fig11}%
\end{figure}

\begin{problem}
Characterize graphs satisfying%
\[
2\alpha\left(  G\right)  \ \leq\left\vert \mathrm{corona}(G)\right\vert
+\left\vert \mathrm{core}(G)\right\vert \leq2\alpha\left(  G\right)  +1.
\]

\end{problem}

\end{document}